\newcommand{\bra}[1]{\langle{#1} \vert}
\newcommand{\ket}[1]{\vert{#1} \rangle}
\def\one{{\mathchoice {\rm 1\mskip-4mu l} {\rm 1\mskip-4mu l} {\rm
1\mskip-4.5mu l} {\rm 1\mskip-5mu l}}}
\def\tr{{\rm tr}\; }
\def\cO{{\cal O}}
\def\({\left(}
\def\){\right)}
\newcommand{\braket}[2]{\langle#1|#2\rangle}
\newtheorem{lemma}{Lemma}
\newtheorem{theorem}{Theorem}
\newtheorem{definition}{Definition}
\newtheorem{proof}{Proof}
\begin{document}

\title{Spectral Gap Amplification}

\author{R. D.  Somma}
\affiliation{Los Alamos National Laboratory, Los
Alamos, NM 87545, USA}
\email {somma@lanl.gov}
\author{S. Boixo }
\affiliation{
Information Sciences Institute, USC, Marina del Rey, CA 90292, USA. Harvard University, Cambridge, MA 02138, USA.}
\email{sboixo@isi.edu}

\date{\today}

\begin{abstract}
 Many problems can be solved by preparing a 
specific eigenstate of some Hamiltonian $H$. The generic cost
of quantum algorithms for these problems is determined by the inverse
spectral gap of $H$ for that eigenstate and the cost of evolving with $H$
for some fixed time. The goal of spectral gap amplification is  to 
 construct a Hamiltonian $H'$ with the same eigenstate as $H$ but 
 a bigger spectral gap, requiring that constant-time evolutions with $H'$ and $H$
are implemented with nearly the same cost. 
 We show that a
quadratic spectral gap amplification is possible when $H$ satisfies
a frustration-free property and give $H'$ for these cases. 
This results in quantum speedups for optimization problems.
It also yields improved constructions for  adiabatic simulations
of quantum circuits and for the preparation of projected entangled pair states (PEPS),
which    play an important role in quantum many-body physics.
Defining a suitable
black-box model, 
we establish that the quadratic amplification is optimal 
for frustration-free Hamiltonians   and that no 
spectral gap amplification is possible, in general, if the frustration-free
property is removed. A corollary is that
finding a similarity transformation
between a stoquastic Hamiltonian and the corresponding stochastic matrix is hard in the black-box model,
setting  limits to the power of some classical methods that simulate quantum adiabatic evolutions.
\end{abstract}

\maketitle 

{\bf Keywords: Quantum Algorithms, Adiabatic Quantum Computing, Quantum Monte-Carlo}


\pagestyle{myheadings}
\thispagestyle{plain}
\markboth{R. D. Somma AND S. Boixo }{Spectral Gap Amplification}

\section{Introduction and summary of results}
Many problems in physics and optimization,
such as describing quantum phases of matter or solving satisfiability,
can be reduced to the computation of low-energy states of Hamiltonians~(see, e.g.\cite{sachdev_2001,farhi:qc2001a}). Methods to compute such states exist,
with adiabatic state transformation (AST) being one of the 
most acknowledged and powerful heuristics for that goal~\cite{amara:qc1993a,
finnila:qc1994a,kadowaki:qc1998a,farhi_quantum_2000,farhi:qc2001a,farhi_quantum_2002}.
The problem in AST involves transforming a pure quantum state 
$\ket{\psi_0}$ into $\ket{\psi_1}$; these  being the endpoints of a
state path $\ket{\psi_s}$, $0 \le s \le 1$, where each $\ket{\psi_s}$ is an eigenstate of a Hamiltonian
$H(s)$.
Quantum strategies to solve the AST problem typically require evolving with the Hamiltonians for time $T>0$,
which determines the cost of the method. A well-known example
is given by the quantum adiabatic theorem and requires  changing the interaction parameter $s$ 
sufficiently slowly in time~\cite{messiah_1999,jansen_bounds_2007}, but more efficient
strategies exist~\cite{boixo:qc2009a,wocjan_speed-up_2008,boixo:qc2010a}.
When direct access to Hamiltonian evolutions is not possible, such evolutions can be simulated in the standard quantum circuit model
by means of Trotter-Suzuki-like formulas~\cite{berry_efficient_2007,wiebe_product_2010,papageorgiou_efficiency_2010}.
For these cases, the cost of the method does not only depend on $T$, but also depends on
the additional cost of approximating fixed-time Hamiltonian evolutions by  quantum circuits. 
For this reason, it is standard 
to restrict to those $H(s)$ whose  fixed-time evolutions
can be approximated by quantum circuits of small size (e.g., size bounded by
constant-degree polynomials).

Let $L$ be the angular length determined by $\{ \ket{\psi_s}
\}_{0 \le s \le 1}$ and $\Delta_s \ge \Delta$ the spectral gap to
the other nearest eigenvalue of $H(s)$.  Recently, we provided fast
quantum methods for eigenpath traversal that do not exploit the structure
of the Hamiltonians and solve the AST problem in  time
$T \in \Omega(L/\Delta)$~\cite{boixo:qc2010a}. Such methods 
were shown to be optimal in a black-box model~\cite{boixo:qc2009b}.
However, the results in  Ref.~\cite{boixo:qc2009b} do not forbid
the existence of faster methods for solving the AST
problem if additional knowledge about the structure of $H(s)$ is available. Our paper
focuses on this observation and introduces the spectral gap  amplification problem or GAP.
Basically, we intend to answer:
Can we construct  a Hamiltonian $H'(s)$ that also has $\ket{\psi_s}$ as
eigenstate, but spectral gap $\Delta'_s \gg \Delta_s$?

When spectral gap amplification is possible, it can be used as a
technique to find new quantum
speedups in both, the Hamiltonian-based model and the quantum circuit model. In Refs.~\cite{somma_quantum_2008,somma_optimization_2010}, for
example, we constructed the so-called quantum simulated annealing
algorithm that provided speedups of the well-known simulated annealing
method implemented using Monte-Carlo. The reason for such speedups is
a gap amplification step: if $\Delta$ is the spectral gap of the
stochastic matrix $S$ used in Monte-Carlo, there is a Hamiltonian $H'$
with spectral gap $\Delta' \ge \sqrt{\Delta}$ and eigenstate that
allows to sample from the fixed point of $S$.  To build $H'$ we used
previously-developed tools for quantum
walks~\cite{szegedy_quantum_2004,ambainis_walk_2004}.

Motivated by the results
in Refs.~\cite{somma_quantum_2008,somma_optimization_2010} we study 
the GAP in different scenarios. We first show that, for
Hamiltonians that satisfy a frustration-free
property~\cite{perez_PEPS_2008,bravyi_stoquastic_2009,beaudrap_frustfree_2010,wolf_phase_2006},
a quadratic spectral gap amplification is possible and {\em
 optimal} in some suitable black-box model. Quadratic spectral gap amplification was previously known
only for Hamiltonians that result from a similarity transformation of
stochastic
matrices~\cite{szegedy_quantum_2004,somma_quantum_2008,somma_optimization_2010}. These
Hamiltonians, which are the so-called discriminants, are also stoquastic, i.e. the off-diagonal entries are
non-positive.  A direct consequence of our new construction is that spectral
gap amplification can now be achieved for any frustration-free Hamiltonian as those used in quantum adiabatic
simulations of quantum circuits, improving upon the results in Refs.~\cite{aharonov_adiabatic_2007,mizel_equivalence_2007,aharonov_line_2009}.
The eigenstate in our construction
is not the lowest-energy state of the Hamiltonian induced by the quantum circuit (which was the case
in Refs.~\cite{aharonov_adiabatic_2007,mizel_equivalence_2007,aharonov_line_2009}).
Nevertheless,  techniques for AST work for any eigenstate and still apply to our case, giving more efficient
quantum adiabatic simulations to prepare the circuit's output state. 
In addition,
some low-energy eigenstates of general frustration-free Hamiltonians (termed PEPS) were  shown
to play an important role in physics, renormalization, and optimization~\cite{verstraete_peps_2006,wolf_phase_2006,perez_PEPS_2008}:
PEPS are useful variational states
reproducing the local physics with high accuracy~\cite{verstraete_peps_2004,verstraete_peps_2006}. 
Our method for spectral gap amplification can be used to speedup the preparation
of PEPS on a quantum computer, with an expected improvement of other methods
for this goal (e.g.,~\cite{schwarz_peps_2011}).

Interestingly,
it is hard to achieve spectral gap amplification 
for general Hamiltonians that do not satisfy the frustration-free property.
It would otherwise imply fast quantum methods
for  some problems,
 contradicting known complexity bounds in
the oracle setting~\cite{bennett_searchbound_1997}.
In addition, 
a natural generalization of our technique for spectral gap amplification 
would output a Hermitian operator only if the input Hamiltonian $H$
is semidefinite positive (as is the case for frustration-free Hamiltonians).
Such a generalization could then be used to solve the decision problem of 
whether a given $H$ is semidefinite positive or not
(with some small error bound), and
provide an estimate to
the lowest eigenvalue of $H$.
This is a complete problem for the class QMA~\cite{kitaev_computation_2002},
and finding its solution may require a quantum circuit of undesirably large size.

The result on the impossibility of general spectral gap amplification 
 has additional implications
on the power of classical methods that solve the AST problem.
For example,
some stoquastic Hamiltonians
can be mapped to stochastic matrices
through a similarity transformation~\cite{somma_thermod_2007,bravyi_stoquastic_2008,bravyi_stoquastic_2009}. 
The fixed point of the stochastic matrix coincides
with the distribution induced by the ground state
of the Hamiltonian.  The stochastic matrix would allow us to build a Monte-Carlo
method to sample
from the fixed point and classically solve the AST problem in this case.
The same similarity transformation could then be used
to give a frustration-free representation of the Hamiltonian~\cite{verstraete_peps_2006,somma_thermod_2007},
allowing us to use the results 
for frustration-free Hamiltonians to amplify the gap.
This contradicts the impossibility result, 
implying that finding the transformation is hard 
(in the black-box setting).

We organize the manuscript  as follows. First, in Sec.~\ref{conventions}
we define the GAP in more detail and
present some methods for the proofs of the quadratic amplification in the frustration-free case,
and for the impossibility of amplification in general. In Sec.~\ref{frustrationfree} we show the construction
of $H'$ for frustration-free Hamiltonians. In Sec.~\ref{blackbox} we prove that the cost of implementing evolutions 
under $H'$ is nearly the same as the cost of implementing evolutions under $H$ in the black-box
model, and present 
a simulation method. In Sec.~\ref{optimal} we prove that the quadratic amplification is optimal 
for frustration-free Hamiltonians in this black-box model. In Sec.~\ref{adiabaticsimulation}
we comment on the advantages of our gap amplification construction for the adiabatic simulation
of quantum circuits and give a local Hamiltonian. In Sec.~\ref{generalcase} we prove that no gap amplification
is possible for general Hamiltonians in the black-box model. This results in limits
on the power of classical simulations for quantum systems, that are discussed in
Sec.~\ref{stochastic-stoquastic}.
We conclude in Sec.~\ref{conc}.

\section{Definitions and methods}
\label{conventions}

We first comment on the  implementation cost of Hamiltonian evolutions. 
Ignoring precision, we assume Hamiltonians $H$ given as black boxes
that implement the evolution $ \exp \{ - i H t \}$ on ancilliary input $\ket t$,
satisfying $|t| \le \pi$. When we write $H = \sum_k \Lambda_k$ for the Hamiltonian,
we assume a modified black-box that implements  $\exp \{ - i \Lambda_k t \}$
on ancilliary input $\ket k \ket t$. Such a black box is typically an efficient quantum circuit in applications.
 We write $O_H$ for the black-box.
The implementation cost of $\exp \{ - i H' s \}$ in the black-box model,
where $H'$ is the Hamiltonian resulting from the gap amplification construction
applied to $H$, will be the 
number of times $O_H$ is called to implement the operation in either case.

\begin{definition}
 Let $ H = \sum_k \Lambda_k  \in \mathbb{C}^N \times \mathbb{C}^N$ be a finite-dimensional Hamiltonian, $\ket \psi$ a (unique)
  eigenstate of $H$ with eigenvalue $\lambda$, and $\Delta$ the  
  gap to the other nearest eigenvalue of $\lambda$; i.e. $\Delta$ is the spectral gap.  The
  goal of the GAP is to find a new Hamiltonian $H'$ that has $\ket \psi
  \otimes \ket {\mathfrak{0}}$ as (unique) eigenstate and spectral
  gap $\Delta' \in \Omega(\Delta^{1-\epsilon})$, for $ \epsilon >
  0$.  $\ket{\mathfrak{0}}$ is any subsystem's state for which an
  efficient quantum circuit is known.  The implementation cost of $\exp\{-i H' t \}$ in the black-box model
  must be of the same order as the implementation cost of $\exp\{- i H t
  \}$.
\end{definition}

The last requirement forbids naive constructions such as $H' = c H$
for $c \gg 1$ and is needed to carry quantum speedups from the
Hamiltonian-based model, in which the cost is the evolution time, to
the black-box model, in which the cost is the number of black boxes.
If both $H$ and $H'$ are sparse with a bounded number of (efficiently computable)
non-zero entries per row and matrix oracles
for the $\Lambda_k$ are provided, the last requirement can be
satisfied~\cite{somma_physics_2002,aharonov_adiabatic_2003,
  berry_efficient_2007, childs_efficient_2010}.

\begin{definition}
  A Hamiltonian $H \in \mathbb{C}^N \times \mathbb{C}^N$ is
  frustration free if it can be written as  $H = \sum_{k=1}^L
  a_k \Pi_k$, with (known) $0 \le a_k \le 1$, $(\Pi_k)^2 =
  \Pi_k$ projectors, and $L \in \cO[\text{polylog}(N)]$.  Further, if
  $\ket \psi$ is a lowest-eigenvalue eigenstate (ground state ) of
  $H$, then $\Pi_k \ket \psi = 0 \ \forall \ k$.
\end{definition}

$\ket \psi$ is then a ground state of every term in the decomposition
of $H$. 
We will assume that $a_k = 1 \ \forall
k$ when there is no loss of generality. Then $H = \sum_k \Pi_k$ is still frustration free, it has $\ket
\psi$ as ground state, it has at least the spectral gap of $\sum_k a_k \Pi_k$, and $\| H \| \le L$. While the $\Pi_k$'s may be
local operators, we do not make that assumption here. It will suffice
to have access to the black box $O_H$.

We consider the unstructured search problem or SEARCH  to prove
the optimality of quadratic gap amplification for frustration-free Hamiltonians and 
also the impossibility
of gap amplification for general Hamiltonians. We introduce SEARCH and describe  a simple 
variant of Grover's algorithm to solve it.
\begin{definition}\label{def:search}
Consider a family of Boolean functions or \emph{oracles} $\{f_x\}$, with
domain $[0,..,M-1]$, that satisfy $f_x(y)=\delta_{xy}$. 
The goal of SEARCH is
to find the unknown word $x$, with high probability,
by querying $f_x$ the least possible number of times.
\end{definition}

For quantum algorithms,
we use a reversible version of $f_x$ defined as
\begin{align}
\label{searchoracle}
Q_x \ket{y} \ket b=  \ket{y} \ket {b \oplus f_x(y)} \; ,
\end{align}
with $b \in \{0,1\}$.
We refer to $Q_x$ as the search  oracle.
Quantum algorithms that solve SEARCH
with $\Theta(\sqrt{M})$ uses of
$Q_x$, and other $x$-independent operations, exist~\cite{bennett_searchbound_1997,
farhi_analog_1998}. Classical algorithms require $\Theta(M)$
queries to $f_x$.

From Ref.~\cite{childs_quantum_2002}
we can give a quantum algorithm to solve SEARCH
using projective measurements.
\begin{lemma}
\label{lemma:random_grover}
  Let $\{H_x\}$ be a family of Hamiltonians  
  with unique eigenstates $\ket{\psi_x}$ satisfying $p_x=|\bra x \psi_x \rangle|^2 \in \cO(1)$ and $p_s=|\bra s \psi_x \rangle |^2 \in \cO(1)$, where $\ket s = \frac 1 {\sqrt{M}} \sum_{y=0}^{M-1} \ket y$ is the equal superposition state.
 A quantum algorithm can find $x$
  with cost $\cO(1/\Delta)$, where $\Delta$ is a bound on the
  eigenvalue gap of $H_x$.
  \end{lemma}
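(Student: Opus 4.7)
The strategy is to use an eigenstate-projection measurement (implemented by phase estimation) on $H_x$ applied to the $x$-independent state $\ket{s}$, then read out $x$ in the computational basis. The two constant-overlap hypotheses combine to yield a constant success probability, while the only super-constant cost comes from resolving the eigenvalue gap $\Delta$.

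\textbf{Step 1 (prepare the seed state).} Prepare $\ket{s}$ by applying Hadamards on $\lceil \log_2 M \rceil$ qubits; this is $x$-independent and incurs no calls to $O_{H_x}$. Decompose $\ket{s} = \alpha_x \ket{\psi_x} + \beta_x \ket{\psi_x^\perp}$, where $\ket{\psi_x^\perp}$ lies in the span of the remaining eigenstates of $H_x$. By assumption $|\alpha_x|^2 = |\braket{s}{\psi_x}|^2 \in \Omega(1)$.

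\textbf{Step 2 (resolve the spectrum).} Run phase estimation for $\exp\{-iH_x t\}$ with a fixed $t\in \cO(1)$ on $\ket{s}$, using a phase register of $\cO(\log(1/\Delta))$ qubits to achieve precision $\delta = \Delta/c$ for a sufficiently large constant $c$. This requires $\cO(1/\Delta)$ calls to $O_{H_x}$. Since every competing eigenvalue of $H_x$ is separated from that of $\ket{\psi_x}$ by at least $\Delta$, reading the phase register and recording whether the outcome lies within $\delta$ of $\lambda_{\psi_x}$ yields, up to $\cO(1/c)$ error, the indicator of the projector $\ket{\psi_x}\!\bra{\psi_x}$. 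Conditioned on the "good" outcome (probability $|\alpha_x|^2 - \co(1) \in \Omega(1)$), the data register carries $\ket{\psi_x}$ up to small error.

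\textbf{Step 3 (read out $x$).} Measure the data register in the computational basis. By $|\braket{x}{\psi_x}|^2 \in \Omega(1)$, the outcome equals $x$ with constant probability. The candidate can be verified using a single call to the search oracle $Q_x$. The success probability per round is the product of three $\Omega(1)$ factors, so a constant number of rounds boosts it above $2/3$, and the total cost remains $\cO(1/\Delta)$.

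\textbf{Main obstacle.} The delicate point is that phase estimation does not realize an exact rank-one projector onto $\ket{\psi_x}$: its "good" outcome is a smeared indicator, and its miscalibration can both leak amplitude from $\ket{\psi_x^\perp}$ into the good outcome and corrupt the post-measurement state. The proof must quantify these errors at precision $\Delta/c$ and show that they shrink the two constant overlaps by at most a controlled amount. This can be done either by standard tail bounds on the quantum Fourier transform (boosting $c$) or by directly invoking the measurement-based search framework of Ref.~\cite{childs_quantum_2002}, where essentially this argument already appears.
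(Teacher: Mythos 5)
Your proposal is correct and follows essentially the same route as the paper: initialize in $\ket{s}$, simulate a projective measurement of $\ket{\psi_x}$ via phase estimation at precision $\Theta(\Delta)$ (cost $\cO(1/\Delta)$ black-box calls), then measure in the computational basis, with success probability at least $p_x\, p_s \in \Omega(1)$. The paper's proof is terser, citing Refs.~\cite{kitaev_quantum_1995,childs_quantum_2002,boixo:qc2009a} for the projective-measurement step rather than spelling out the phase-estimation error budget as you do, but the argument and cost accounting are the same.
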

  \begin{proof}
    The probabilities $p_x$ and $p_s$ are bounded independently of $M$.
     If we initialize a quantum computer in $\ket s$ and next
    we perform two projective measurements of $\ket{\psi_x}$ and $\ket
    x$, respectively, we can find $x$ with probability no smaller than
    $p_x. p_s $.  The cost of this method is measured by $T$, the time of evolving
    with $H_x$. This is needed to simulate the measurement of
    $\ket{\psi_x}$, with high accuracy, using the well-known phase estimation
    algorithm~\cite{kitaev_quantum_1995,childs_quantum_2002} or by phase
    randomization~\cite{boixo:qc2009a}. The  cost $T$ is dominated by the
    inverse gap of $H_x$. In optimal constructions, $\Delta \in \cO(1/\sqrt{M})$ so that
    $T \in \cO(\sqrt{M})$.
  \end{proof}

\section{GAP:   Frustration-free Hamiltonians}
\label{frustrationfree}
When a Hamiltonian $H=\sum_k \Pi_k$ is frustration free, local, and
stoquastic~\cite{bravyi_stoquastic_2008}, we could use the results
in Ref.~\cite{bravyi_stoquastic_2009}
and Refs.~\cite{somma_quantum_2008,somma_optimization_2010} to build a
Hamiltonian $H'$ with a quadratically bigger spectral gap. In that case, $H$ is
the so-called discriminant of a Markov process or stochastic
matrix.  Such a construction uses tools from quantum walks~\cite{szegedy_quantum_2004}
 to speed up Markov chain based
algorithms and requires full knowledge of each $\Pi_k$. 
 Here, we  present a novel and more efficient technique that allows
for quadratic gap amplification of any frustration-free Hamiltonian, without
the requirement of $H$ being local and stoquastic. Later we show that this technique provides
the optimal amplification
and present a black-box simulation method for the evolution 
 with $H'$ that does not require full knowledge of each $\Pi_k$.

\begin{theorem}\label{th:gap_amp_ff}
  Let $H=\sum_{k=1}^L \Pi_k \in \mathbb{C}^N \times \mathbb{C}^N$ 
  be a frustration-free Hamiltonian, $\ket \psi$ a ground state,
  and $\Delta$ the spectral gap or smallest nonzero eigenvalue.
 Then, there exists
  $H'$ satisfying $H' \ket \psi \ket{\mathfrak{0}}=0$ and eigenvalue gap $\Delta' \in \Omega(\sqrt{\Delta  L})$. The dimension of the null space of
$H'$ is that of the null space of $H$.
\end{theorem}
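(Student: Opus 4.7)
The plan is to build $H'$ on an enlarged Hilbert space that uses an ancilla register to index the projectors. Concretely, I would introduce an $(L+1)$-dimensional ancilla with orthonormal basis $\{\ket{\mathfrak{0}},\ket{1},\ldots,\ket{L}\}$ and set
\[
A \;=\; \sqrt{L}\,\sum_{k=1}^{L} \Pi_k \otimes \ket{k}\bra{\mathfrak{0}}\,,
\qquad
H' \;=\; A + A^\dagger\,,
\]
acting on $\mathbb{C}^N\otimes\mathbb{C}^{L+1}$. The explicit factor $\sqrt{L}$ is what will produce the $\sqrt{\Delta L}$ (rather than $\sqrt{\Delta}$) amplification, while keeping $\|H'\|\le 2\sqrt{L\|H\|}\le 2L$, so $H'$ remains on the same energy scale as $H$.

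The spectral analysis rests on two algebraic identities. Since $\braket{\mathfrak{0}}{k}=0$ for every $k\ge 1$, one has $A^2=0$; and since $\Pi_k^2 = \Pi_k$,
\[
A^\dagger A \;=\; L\, H \otimes \ket{\mathfrak{0}}\bra{\mathfrak{0}}\,.
\]
Consequently $(H')^2 = A^\dagger A + AA^\dagger$, and because $A^\dagger A$ and $AA^\dagger$ share the same non-zero spectrum (a standard singular-value fact), the non-zero eigenvalues of $H'$ are $\pm\sqrt{L\lambda}$ for each non-zero eigenvalue $\lambda$ of $H$. In particular, the smallest non-zero eigenvalue in absolute value is $\sqrt{L\Delta}$, giving the claimed gap. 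The relation $\Pi_k\ket{\psi}=0$ for every $k$ kills both $A$ and $A^\dagger$ on $\ket{\psi}\ket{\mathfrak{0}}$, so $H'\ket{\psi}\ket{\mathfrak{0}}=0$ as required.

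The null-space dimension claim is the delicate point and what I expect to be the main obstacle. Taken literally, $\ker A^\dagger$ inside $\mathcal{H}\otimes\text{span}\{\ket{1},\ldots,\ket{L}\}$ is large, so $H'$ picks up many spurious zero modes. I would resolve this by restricting attention to the invariant subspace $V_0\oplus A(V_0)$, with $V_0 = \mathcal{H}\otimes\ket{\mathfrak{0}}$: this subspace contains $\ket{\psi}\ket{\mathfrak{0}}$, is preserved by $H'$, and the block-off-diagonal form of $H'$ on it forces the kernel there to be exactly $\ker H \otimes \ket{\mathfrak{0}}$, matching $\dim\ker H$. Equivalently, one may augment $H'$ with a suitable penalty term supported on the orthogonal complement of $V_0\oplus A(V_0)$ to lift the spurious zero modes without changing the amplified gap to leading order.

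The remaining bookkeeping---writing the gap claim on the relevant subspace, matching scales, and checking in later sections that the black-box implementation cost of $e^{-iH't}$ is of the same order as that of $e^{-iHt}$---should be routine. The conceptual heart of the argument is the identity $A^\dagger A = L\, H\otimes\ket{\mathfrak{0}}\bra{\mathfrak{0}}$ combined with $A^2=0$, which together encode a \emph{square-rooted and rescaled} copy of $H$ inside $H'$ and produce the quadratic amplification.
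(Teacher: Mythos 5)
Your construction is correct in substance, and it takes a genuinely different route from the paper's main proof --- though, in fact, it coincides with the paper's own alternative construction $\tilde G$ in Appendix~B (up to the overall factor $\sqrt L$ and the relabeling $\ket{0}\leftrightarrow\ket{\mathfrak 0}$). The paper's primary argument defines $U=e^{-i\pi X}$ with $X=\sum_k\Pi_k\otimes\ket k\bra k$, sets $\ket{\mathfrak 0}$ to the \emph{uniform} ancilla state, forms the reflection $G=UPU-P$, and reads off the amplified spectrum from $2\times 2$ rotation blocks with angles $\alpha_j$ obeying $\cos\alpha_j=1-2\lambda_j/L$, eventually using $\sin\alpha_j\ge\sqrt{2\lambda_j/L}$. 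Your approach, $H'=A+A^\dagger$ with $A=\sqrt L\sum_k\Pi_k\otimes\ket k\bra{\mathfrak 0}$ and $\ket{\mathfrak 0}$ an orthogonal ancilla basis state, avoids that trigonometric bookkeeping entirely: from $A^2=0$ and $A^\dagger A=LH\otimes\ket{\mathfrak 0}\bra{\mathfrak 0}$ you get $(H')^2=A^\dagger A+AA^\dagger$ and the eigenvalues $\pm\sqrt{L\lambda_j}$ immediately, which is cleaner and makes the $\sqrt{\Delta L}$ scaling manifest. The one step you should tighten is the removal of spurious zero modes: a penalty ``supported on the orthogonal complement of $V_0\oplus A(V_0)$'' is not constructible from $O_H$ alone, because $A(V_0)$ depends on the unknown eigenvectors of $H$. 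The fix that actually works (and that the paper uses in both constructions) is the coarser penalty $\delta(\one-P)$ with $P=\one\otimes\ket{\mathfrak 0}\bra{\mathfrak 0}$ and $\delta=\sqrt{\Delta L}$; a quick check on the $2\times 2$ blocks
\begin{align}
\begin{pmatrix} 0 & \sqrt{L\lambda_j} \\ \sqrt{L\lambda_j} & \delta \end{pmatrix}
\end{align}
shows the eigenvalue closest to zero has magnitude $\tfrac12\bigl(\sqrt{\delta^2+4L\lambda_j}-\delta\bigr)\ge\tfrac{\sqrt 5-1}{2}\sqrt{\Delta L}$, so the $\Omega(\sqrt{\Delta L})$ gap survives while the null space collapses to exactly $\ker H\otimes\ket{\mathfrak 0}$, as the theorem requires.
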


\begin{proof}
The proof is constructive. 
We let
\begin{align}
\label{eq:Xdef}
 X = \sum_{k=1}^L \Pi_k \otimes \Upsilon_k 
\end{align}
 be a Hamiltonian, and $\Upsilon_k = \ket k \bra k$
a rank-one projector acting on $\mathbb{C}^L$. It is easy to show that
$X^n=X$ for $n\ge 1$. We  define the unitary  operator
\begin{align}
U = e^{- i \pi X} = \one - 2 \sum_{k=1}^L \Pi_k \otimes \Upsilon_k \; ,
\end{align}
with $\one$ the identity matrix whose dimension will be clear from context
(here it is $NL \times NL$).
The implementation of $U$ requires
a single black box $O_H$ and $U^2 = \one$.
In addition,  
$\ket{\mathfrak {0}} =\frac 1 {\sqrt{L}} \sum_{k=1}^L  \ket k$
is an ancilliary state and  $P=\one \otimes \ket{\mathfrak {0}}\bra{\mathfrak {0}}$ its projector.
Because $L \in \cO[\text{polylog} (N)]$,  quantum circuits that prepare $\ket{\mathfrak {0}}$
efficiently exist~(e.g. \cite{somma_physics_2002}).

We define the Hamiltonian
\begin{align}
G &= UPU-P \; .
\end{align}
If  $\{ \ket{ \psi_j} \}_{1 \le j \le N}$ are
the eigenstates of $H$   with eigenvalues  $\lambda_1 =0 \le
\lambda_2 \le \ldots \le \lambda_N \le L$,  $G$ leaves invariant
the at most two-dimensional subspace ${\cal V}_j$ spanned by
$\{ \ket{\psi_j} \ket{\mathfrak{0}} , U \ket{\psi_j} \ket{\mathfrak{0}} \}$,
 for all $j$. To show this, we 
need an important property that is a consequence
of our construction:
\begin{align}
\label{property1}
PUP= \left( \one -\frac 2 {L} H \right ) \otimes \ket{\mathfrak {0}}\bra{\mathfrak {0}} = A \otimes \ket{\mathfrak {0}}\bra{\mathfrak {0}} \; .
\end{align}
The eigenvalues of  $A$ are $\gamma_j =1  -2 \lambda_j /L$
and the eigenstates are also the $\ket{\psi_j}$.
To avoid technical issues we assume that
$\gamma_N \ge 0$. If this were not the case, the projector $P$ could be easily
modified so that $\gamma_j =1  - \lambda_j /L$ instead, and the assumption is fulfilled. 
We can think of $\gamma_j$ 
as being $\cos \alpha_j= \bra{\psi_j} \bra{\mathfrak{0}} PUP \ket{\psi_j} \ket{\mathfrak{0}} $, so that $\alpha_j$ is the
angular distance between the states $\ket{\psi_j} \ket{\mathfrak{0}}$ and $U\ket{\psi_j} \ket{\mathfrak{0}}$
(Fig.~\ref{fig:vectorrep}).
When $j$
is such that $\lambda_j=0$, we have
$U \ket{\psi_j} \ket{\mathfrak{0}} =\ket{\psi_j} \ket{\mathfrak{0}}$ and ${\cal V}_j$
is one dimensional. This implies that  $ \ket{\psi_j} \ket{\mathfrak{0}}$ is an eigenstate of $G$
of  eigenvalue 0. When
$\lambda_j \ne 0$, Eq.~(\ref{property1}) gives $G \ket{\psi_j} \ket{\mathfrak{0}} = \gamma_j U \ket{\psi_j} \ket{\mathfrak{0}}
- \ket{\psi_j} \ket{\mathfrak{0}}$, and $G$ is invariant in ${\cal V}_j$.

\begin{figure}[ht]
  \centering
  \includegraphics[width=10cm]{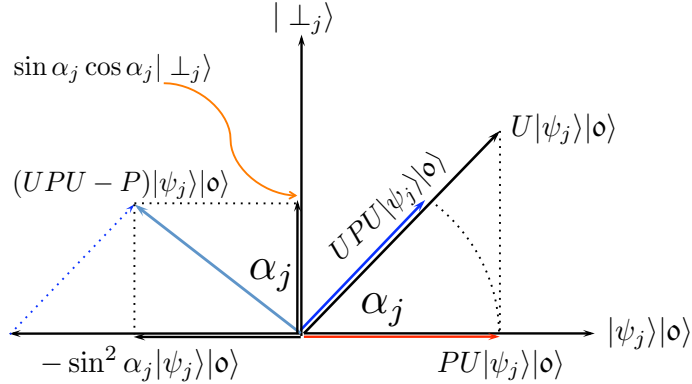}
 \caption{
Geometric representation of the action of $G$ on $\ket{\psi_j} \ket{\mathfrak{0}}$.}
  \label{fig:vectorrep}
\end{figure}

When $\lambda_j \ne 0$ (or $\alpha_j \ne 0$), 
we define $\ket{\perp_j}$ as the unit state orthogonal to $\ket{\psi_j} \ket{\mathfrak{0}}$
in the subspace ${\cal V}_j$. 
We can use simple
geometry arguments (Fig.~\ref{fig:vectorrep}) to show that the two-dimensional block of $G$ in the basis 
$\{ \ket{\psi_j} \ket{\mathfrak{0}} ,\ket{\perp_j}\} $ is
\begin{align}
G_j= \begin{pmatrix} -  \sin^2 \alpha_j & \sin \alpha_j \cos \alpha_j \cr
\sin \alpha_j \cos \alpha_j & \sin^2 \alpha_j
\end{pmatrix} \; .
\end{align}
The eigenvalues of $G$ in this subspace are
 $\pm \sin \alpha_j$. Because  $\sin \alpha_j \ge \sqrt{2 \lambda_j/L}$,
 a quadratic amplification in the spectral gap follows.

 We need to account for a small technical issue
 to obtain  $H'$. $G$ acts trivially on any other eigenstate $\ket \phi$ orthogonal to
 $\oplus_j {\cal V}_j$. The dimension of the null space of $G$ is much larger than
 the dimension of the null space of $H$. Usually, this poses no difficulty in the AST
 problem because $G$ does not generate transitions between the states $\ket{\psi_j} \ket{\mathfrak{0}}$ and any other state in the null space of $G$.  
 In those cases we can assume $H' =L G$ so that $\|H' \| \le L$. Otherwise, 
 since    $ P \ket \phi 
 =( \sum_j \ket{\psi_j}
 \bra{\psi_j} \otimes \ket{\mathfrak{0}} \bra{\mathfrak{0}} ) \ket \phi  = 0$,
 we can easily remove the unwanted degeneracy of $G$ by adding a ``penalty term'' to those states
 whose ancilliary register is not in $\ket{\mathfrak{0}}$. That is,
 \begin{align}
 H'=L G + \delta (\one -P)
 \end{align}
 for some $\delta >0$ determined below.
 Note that, if $\lambda_j=0$,
  $\ket{\psi_j} \ket{\mathfrak{0}}$ is still in the null space of $H'$.
 For $\lambda_j \ne 0$, the matrix  of $H'$ in ${\cal V}_j$ is
 \begin{align}
H'_j= L \begin{pmatrix} -  \sin^2 \alpha_j & \sin \alpha_j \cos \alpha_j \cr
\sin \alpha_j \cos \alpha_j & \delta/L + \sin^2 \alpha_j 
\end{pmatrix} \; .
\end{align}
It suffices to choose $\delta =\sqrt{\Delta  L}$ to 
guarantee that $\xi_j ^\pm$, the eigenvalues of $H'$ in this subspace,
satisfy $|\xi_j^\pm| \ge(1/6) \sqrt{\lambda_j L}$.
The gap of $H'$ is then $\Delta ' \in \cO(\sqrt{\Delta L})$
and the dimension of the null space of $H'$ coincides with that of the null space of $H$.
\end{proof}

Our choice of $\delta$ uses $\Delta$, which may be unknown. 
We can then safely use a lower bound on the spectral gap of $H$,
$\bar \Delta \le \Delta$,  
and build $H'$ using $\bar \Delta$ instead.
The loss is that the resulting gap of $H'$ may not be
the largest possible. However, this is not a drawback if we use
$H'$ (or $LG$) for the AST problem. As mentioned, the relevant
spectral gap is of order $\sqrt{\Delta L}$ even if $\Delta$
is unknown, and the role of $\bar \Delta$ is unimportant in this case.

\subsection{Black-box simulation}
\label{blackbox}
In a Hamiltonian-based model of quantum computation, 
a gain on the spectral gap readily provides a quantum speedup for the AST problem. This 
model assumes the ability of evolving under $H'$, i.e. applying $V(t)=\exp\{-i H' t\}$ directly. 
To carry those speedups to the quantum circuit model, we need a way of approximating
a fixed-time evolution by a short sequence of gates. Usually, such a sequence
is obtained by using Trotter-Suzuki-like approximations~\cite{berry_efficient_2007,wiebe_product_2010}.
In those approximations, the evolution of a Hamiltonian $H=\sum_k \Lambda_k$ is built upon
short-time evolutions under each $\Lambda_k$, and each such evolution is written
as a constant-size sequence of gates~\cite{somma_physics_2002}. In other words,
what determines the size of the circuit that approximates $V(t)$ is the implementation cost 
as given by the number of uses of $O_H$, the black box described in Sec.~\ref{conventions}.
Here we describe a simulation method for $V(t)$ whose overall cost is almost 
linear in $L |t|$. In most cases this result  implies a quantum speed-up, in the circuit model, 
 for those methods that solve the AST problem using paths of frustration-free
 Hamiltonians.

\begin{lemma}\label{lem:hp_evolution}
  For any $\kappa$ and $\epsilon$, there is
  a black-box simulation $W(t)$ of $V(t)$ that requires
  using the black box  $\cO\(  |L t|^{1 + 1/(2 \kappa)}\)$ times, $\kappa \in \mathbb{N}^*$,
  and satisfies
 $ \| W(t) -V(t) \| \le \epsilon$.
\end{lemma}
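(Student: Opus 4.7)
The plan is to decompose $H'$ as a short sum of two easily simulable terms and apply a high-order Suzuki--Trotter product formula to $V(t)=e^{-iH' t}$. Concretely, write $H' = H_1 + H_2$ with
\begin{align}
H_1 = L\, UPU, \qquad H_2 = \delta \one - (L+\delta) P .
\end{align}
The first term absorbs all the dependence on the input Hamiltonian $H$ through $U=e^{-i\pi X}$, while $H_2$ acts nontrivially only on the ancillary $\mathbb{C}^L$ register.

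The crucial observation is that $U$ is a self-inverse reflection ($U^2 = \one$, as shown in the proof of Theorem~\ref{th:gap_amp_ff}), so that
\begin{align}
e^{-iH_1 s} = U\, e^{-i s L P}\, U .
\end{align}
The two outer factors $U$ each cost a single call to the black box $O_H$ (with ancillary input $\ket{k}\ket{\pi}$, which produces $e^{-i\pi \Pi_k}$ controlled on $\ket k$). The middle factor $e^{-isLP}$ is a controlled phase rotation by the projector onto $\ket{\mathfrak{0}}$; given that $L \in \cO[\mathrm{polylog}(N)]$ and $\ket{\mathfrak{0}}$ has an efficient preparation circuit, it can be realized with $\cO[\mathrm{polylog}(N)]$ elementary gates but \emph{no} additional black boxes. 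The evolution $e^{-iH_2 s}$ is likewise a pure ancilla-register operation requiring no uses of $O_H$. Hence every primitive short-time evolution costs at most two invocations of $O_H$.

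Inserting these primitives into the $2\kappa$-th order Suzuki--Trotter formula with $r$ equal time slices yields a unitary $W(t)$ satisfying the standard bound
\begin{align}
\| W(t) - V(t) \| \in \cO\!\left( \frac{(\|H'\| \, |t|)^{2\kappa+1}}{r^{2\kappa}} \right),
\end{align}
see e.g.~\cite{berry_efficient_2007,wiebe_product_2010}. Because $\|H'\| \le 2L + \delta \in \cO(L)$, requiring this error to be at most $\epsilon$ gives
\begin{align}
r \in \cO\!\left( (L|t|)^{1+1/(2\kappa)} \, \epsilon^{-1/(2\kappa)} \right).
\end{align}
Each of the $\cO(5^{\kappa-1})$ factors per slice uses $\cO(1)$ calls to $O_H$, so the total cost is $\cO(|Lt|^{1+1/(2\kappa)})$ with the implicit constant absorbing the $\kappa$- and $\epsilon$-dependence.

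The main technical point to be careful about is the bookkeeping: the simulation must separate operations that touch the system register through $H$ (which cost one $O_H$ each) from operations that act purely on the small ancillary register $\mathbb{C}^L$ (which are free in the black-box model). Given the identity $e^{-iH_1 s}=Ue^{-isLP}U$ and the fact that $H_2$ lives entirely on the ancilla, all uses of $O_H$ can be traced to the $U$-factors, yielding the stated scaling.
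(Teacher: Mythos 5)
Your proposal is correct and follows essentially the same route as the paper's own proof: the same split of $H'$ (modulo where the trivial $\delta\one$ offset is bookkept) into $LUPU$ and a pure-ancilla term, the same key identity $e^{-isLUPU}=Ue^{-isLP}U$ costing two $O_H$ calls per factor, and the same appeal to higher-order product formulas giving $\cO(|Lt|^{1+1/(2\kappa)})$ uses of the black box; you have merely written out the Trotter error bookkeeping more explicitly than the paper does.
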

 \begin{proof}
   Up to a trivial offset, 
   $H' = A_1 - (L+\delta)P$, with $A_1 =L UPU$, $\| A_1 \|=L$, and
   $\|P \| = 1$. We can then use known results
   on Hamiltonian simulations~\cite{aharonov_adiabatic_2003,berry_efficient_2007,childs_efficient_2010,papageorgiou_efficiency_2010}
   to approximate $\exp \{ - i H' t \}$ at precision $\epsilon$ by a sequence of
   $c(\epsilon,\kappa) |Lt|^{1 + 1/\kappa }$ concatenated evolutions under $A_1$ and $P$.
  For fixed $\epsilon$ and $\kappa$, $ c(\epsilon,\kappa)$ is constant.
   The proof follows by noticing that
   \begin{align}
   e^{-i A_1 s} = U e^{-i L P s } U \; ,
   \end{align}
   i.e. it requires two uses of $U$, and each $U$ can be implemented using $O_H$ once on input $\ket \pi$.
   Evolutions under $P$ for any time do not require the black box and can be implemented
  with a quantum circuit of size $L$ or smaller.
    \end{proof}

\subsection{Optimal amplification} 
\label{optimal}
We are interested in finding the biggest spectral gap amplification
possible for frustration-free $H$. To this end, we consider the set $\cal F$ of
all those $H'$ that have $\ket \psi \ket{\mathfrak{0}}$ as eigenstate and
 such that 
evolving with $H'$ for time $t$ can  be done
using $\cO(| t|)$ black boxes $O_H$; i.e. $L$ is constant. 
Under these assumptions, the quadratic spectral gap amplification is optimal in the black-box model:
\begin{theorem} 
  $\Delta' \in
  \Theta(\sqrt{\Delta})$. 
\end{theorem}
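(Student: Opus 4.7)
The plan is to prove the upper bound $\Delta' \in \cO(\sqrt{\Delta})$, which combined with the lower bound $\Delta' \in \Omega(\sqrt{\Delta L}) = \Omega(\sqrt \Delta)$ from Theorem~\ref{th:gap_amp_ff} (specialized to $L \in \cO(1)$) gives $\Delta' \in \Theta(\sqrt \Delta)$. The approach is to reduce any putative super-quadratic amplification to a faster-than-BBBV algorithm for SEARCH via Lemma~\ref{lemma:random_grover}.

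The first step is to exhibit a family $\{H_x\}_{x\in[M]}$ of frustration-free Hamiltonians parameterized by the hidden SEARCH target $x$, satisfying (i) $L \in \cO(1)$, each projector realizable with $\cO(1)$ queries to the search oracle $Q_x$, so that a single use of $O_{H_x}$ itself costs $\cO(1)$ oracle queries, (ii) spectral gap $\Delta_x \in \Theta(1/M)$, and (iii) unique ground state $\ket{\psi_x}$ with $|\langle x|\psi_x\rangle|^2,\,|\langle s|\psi_x\rangle|^2 \in \Omega(1)$. A natural candidate is the rescaled two-term history-state Hamiltonian on $\mathbb{C}^2 \otimes \mathbb{C}^M$ whose unique ground state is $\ket{\psi_x} = (\ket 0 \ket s + \ket 1 \ket x)/\sqrt 2$, namely $H_x = (1/M)(\Pi_{\text{init}}+\Pi_{\text{prop}})$ with $\Pi_{\text{init}} = \ket 0 \bra 0 \otimes (\one - \ket s \bra s)$ and $\Pi_{\text{prop}} = \tfrac{1}{2}(\ket 0 - \ket 1 U)(\bra 0 - U^\dagger \bra 1)\otimes \one$, where $U$ is a unitary with $U\ket s = \ket x$ built from $\cO(1)$ queries to $Q_x$ (for instance, a Householder reflection through $\ket s - \ket x$ assembled from oracle-controlled phase rotations $\exp(i\phi\ket x \bra x)$). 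The unrescaled sum $\Pi_{\text{init}}+\Pi_{\text{prop}}$ has constant gap by the standard Kempe--Kitaev--Regev analysis for a two-state clock, so the coefficient $1/M$ (allowed by the GAP definition) tunes $\Delta_x \in \Theta(1/M)$ while preserving frustration-freeness; identifying $\ket x \equiv \ket 1 \ket x$ and $\ket s \equiv \ket 0 \ket s$ gives overlaps exactly $1/\sqrt 2$.

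The second step is the contradiction. Suppose a GAP procedure achieves $\Delta'_x \in \omega(\sqrt{\Delta_x}) = \omega(1/\sqrt M)$ with the defining cost requirement that $\cO(|t|)$ calls of $O_{H_x}$ suffice to evolve under $H'_x$ for time $t$. The amplified ground state $\ket{\psi_x}\otimes\ket{\mathfrak 0}$ inherits the $\Omega(1)$ overlaps with $\ket x \otimes \ket{\mathfrak 0}$ and $\ket s\otimes \ket{\mathfrak 0}$ by tensoring $\ket{\mathfrak 0}$ on both sides. Lemma~\ref{lemma:random_grover} applied to $\{H'_x\}$ then produces an algorithm that recovers $x$ with constant probability using $\cO(1/\Delta'_x)$ calls of $O_{H_x}$, hence $\cO(1/\Delta'_x) = o(\sqrt M)$ queries of $Q_x$. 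This violates the $\Omega(\sqrt M)$ BBBV lower bound~\cite{bennett_searchbound_1997}, forcing $\Delta'_x \in \cO(1/\sqrt M) = \cO(\sqrt{\Delta_x})$.

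The main obstacle is the detailed verification of the family $\{H_x\}$, in particular that $U$ and the full evolution $\exp(-iH_xt)$ can both be realized with a constant number of $Q_x$ queries together with $x$-independent unitaries. Because $H_x$ is a rescaled sum of two bounded-norm projectors, standard Hamiltonian-simulation bounds reduce this to implementing $\exp(-i\tau A)$ with $A = \ket 1 \bra 0 \otimes U + \ket 0 \bra 1 \otimes U^\dagger$, which is controlled-$U$-like and inherits the $\cO(1)$-query cost of $U$; care is only needed to ensure that the full action of $U$ within the two-dimensional plane spanned by $\ket x$ and $\ket s$ can indeed be synthesized from $\cO(1)$ oracle-accessible primitives. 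Once this is established, the SEARCH reduction closes the optimality argument.
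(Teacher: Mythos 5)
Your overall strategy---construct a family $\{H_x\}$ of frustration-free Hamiltonians encoding SEARCH, apply Lemma~\ref{lemma:random_grover}, and invoke the $\Omega(\sqrt M)$ lower bound of Ref.~\cite{bennett_searchbound_1997}---is exactly the route the paper takes in Appendix~\ref{App:OptimalityProof}. The fatal flaw lies in your specific family, and in particular in the unitary $U$ satisfying $U\ket s = \ket x$ that appears inside $\Pi_{\text{prop}}$. No such $U$ can be realized with $\cO(1)$ queries to $Q_x$: if it could, you would prepare $\ket s$, apply $U$, and measure, solving SEARCH in $\cO(1)$ queries---the very statement the BBBV bound forbids. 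Formally, a circuit made of a constant number of controlled-phase oracles $\exp(i\phi\ket x\bra x)$ interleaved with $x$-independent unitaries maps $\ket s$ to a state $\cO(1/\sqrt M)$-close to a fixed $x$-independent vector, whereas $\ket x$ is at distance $\Omega(1)$ from any fixed vector for all but a vanishing fraction of $x$. Consequently the black box $O_{H_x}$ in your construction costs $\Omega(\sqrt M)$ search-oracle calls rather than the $\cO(1)$ you require, and the contradiction in your second step evaporates: $\cO(1/\Delta')$ black boxes translate into $\cO(\sqrt M/\Delta')$ oracle queries, from which the lower bound yields only the trivial $\Delta' \in \cO(1)$.

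A secondary issue is the rescaling $a_k = 1/M$. Under the paper's conventions one may assume $a_k = 1$ without loss of generality: one applies the amplification to $\sum_k\Pi_k$, which for your two-term history-state construction has a constant spectral gap, so the artificial rescaling does not create a hard instance---the gap must be small because of the structure of the projectors themselves. The paper's instance achieves this by defining $\tilde H_x$ through projectors $\Pi_{(y,z)}$ attached to edges of a constant-degree expander graph with a ``defect'' at vertex $x$: every matrix entry depends on $x$ only through the predicate ``is this index equal to $x$,'' which is exactly what a single call to the search oracle supplies, the gap is provably $\Theta(1/M)$, and a fixed edge-coloring packs the projectors into $L\in\cO(1)$ mutually orthogonal groups. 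Repairing your argument would require replacing the history-state family with one having this $\cO(1)$-query locality; as written, the ``main obstacle'' you flag at the end is not a detail to be verified later but precisely the point at which the proof cannot go through.
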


\begin{proof}
We sketch the proof here and leave the full version in
Appendix~\ref{App:OptimalityProof}. We consider instances $\{\tilde H_x \}$,
$x \in \{0,1,\ldots M-1\}$ unknown, that can be used to solve SEARCH as described 
in    Lemma~\ref{lemma:random_grover}. Each $\tilde H_x$
is frustration free and the spectral gap is of order $1/M$. The corresponding black box 
$O_{\tilde H_x}$ can be implemented with a single call to 
the search oracle of Eq.~(\ref{searchoracle}). If $\tilde H'_x$ is the Hamiltonian
with the amplified gap $\Delta'$, SEARCH can be solved by evolving
with $\tilde H'_x$ for time $\Omega(1/\Delta')$. By assumption, this would require using
$\Omega(1/\Delta')$ black boxes $O_{H_x}$  and thus $\Omega(1/\Delta')$ search oracles.
The lower bound on SEARCH implies
 $\Delta ' \in   \Theta(\sqrt{\Delta})$, with $\sqrt{\Delta} \in \cO( 1/\sqrt{M})$.
  \end{proof}

\section{Gap amplification and adiabatic quantum computation}
\label{adiabaticsimulation}
A  property of our  spectral gap amplification method is that it yields  quantum speed-ups
for the AST problem. It is also important to describe
additional advantages of the method, including the simulation of quantum circuits. 
It is known that any quantum circuit specified by quantum gates $U_1 , \ldots , U_Q$
can be simulated, in a Hamiltonian-based model, by an adiabatic quantum evolution involving
frustration-free Hamiltonians $H(s) = \sum_{k=1}^L a_k (s) \Pi_k(s)$. For $s=1$,
the ground state of $H(1)$ has large overlap with the state output by the quantum
circuit. In some constructions, $L \in \cO[\text{poly}(Q)]$ and $\Pi_k$ denotes nearest-neighbor (two-body)
interactions between spins of corresponding many-body systems in one or two-dimensional 
lattices~\cite{aharonov_line_2009,aharonov_adiabatic_2007}. 

The cost of the adiabatic
simulation  depends on the inverse minimum gap of $H(s)$, $\Delta$.
We can therefore use our gap amplification construction for frustration-free Hamiltonians to 
speed-up the  adiabatic simulation. However, in doing so,
we encounter difficulties regarding the locality of $H'$. For example, the projector $P$
can be interpreted as an interaction term involving $\log_2 (L)$ qubits and this is undesired
if we want to find a physical $H'$ whose interactions address a limited number of subsystems.
A simple procedure to avoid many-body terms is to consider each state $\ket {k}$ in $\ket{\mathfrak{0}}$
as a state of $L$ qubits of the form
\begin{align}
\label{eq:singleparticlemap}
\ket k \rightarrow \ket {0 \ldots 0 \! \! \! \! \! \! \underbrace{1}_{\text{ $k$th position}}\! \! \! \!  \! \! 0 \ldots 0 } \; .
\end{align}
The subspace defined by Eq.~(\ref{eq:singleparticlemap}) is typically termed the {\em
single-particle} subspace.
The terms $\ket k \bra{k'} $ appearing in $P$ map to $\sigma^+_k \sigma^-_{k'}$,
with $\sigma^{\pm} =( \sigma^x \pm i \sigma^y)/2$, and $\sigma^{\alpha}$ the corresponding Pauli operator.
The terms $\ket k \bra k $ map to $(\one - \sigma^z_k)/2$. In this subspace, 
\begin{align}
G =\frac 1 L\left [  \sum_{k,k'} e^{i \pi \Pi_k (\one - \sigma^z_k)/2} \sigma^+_k \sigma^-_{k'} e^{-i \pi \Pi_{k'} (\one - \sigma^z_{k'})/2}
- \sigma^+_k \sigma^-_{k'} \right] \; .
\end{align}
If the $\Pi_k$ are two-subsystem interaction terms, $H'=LG + \delta(\one -P)$ will map to 
a combination of four-subsystem terms. This makes $H'$ local
but not spatially local. We also note that, while the overall spectrum of $H'$ may have changed because of the mapping,
the spectrum of $H'$ in the single-particle subspace corresponds to  the one analyzed
in Sec.~\ref{frustrationfree}.

We note that our construction to amplify the gap is not
unique. Indeed, as seen in Appendix~\ref{App:LocalityProof}, the Hamiltonian
\begin{align}
\tilde G =  \sum_{k=1}^L  \Pi_k \otimes ( \ket k \bra{{0}} + \ket 0  \bra k)  
\end{align}
has a similar property: its eigenvalues are $\pm \sqrt{\lambda_j}$. Note that if $a_k \ne 1$ we can modify it as 
$\tilde G=  \sum_{k=1}^L \sqrt{a_k} \; \Pi_k \otimes ( \ket k \bra{{0}} + \ket 0  \bra k)$.
The action of  $\tilde G$ in the single-particle subspace is equivalent to that of
\footnote{The corresponding states are still those in  Eq.~(\ref{eq:singleparticlemap}), 
but each having $L+1$ qubits labeled from $0$ (left) to $L$ (right).}
\begin{align}
\tilde G = \sum_{k=1}^L \Pi_k \otimes ( \sigma^+_k \sigma^-_0 +  \sigma^-_k \sigma^+_0)  \;.
\end{align}
To eliminate the degeneracy of the null eigenvalue in this subspace, we can proceed as in Sec.~\ref{frustrationfree} and add a term proportional to $(\one - \ket 0 \bra 0)\rightarrow (1+\sigma^z_0)/2$, so that
\begin{align}
\tilde G  \rightarrow  \bar G= \sum_{k=1}^L \Pi_k \otimes ( \sigma^+_k \sigma^-_0 +  \sigma^-_k \sigma^+_0)
 +\frac{ \sqrt{\Delta}}{2} \frac{ (1+\sigma^z_0)}{2}  \; .
\end{align}
The relevant spectral gap of $\bar G$ in the single-particle subspace is still of order $\sqrt{\Delta}$.

Nevertheless, an unwanted degeneracy of the null eigenvalue from other (many-particle) 
states is  possible. (Note that any eigenstate in the 0-particle subspace
has eigenvalue exactly $\sqrt{\Delta}/2$).
 To remove such a degeneracy we set
 ``penalties'' to states that belong to different particle subspaces. To this end,
we  note that
\begin{align}
\nonumber
0&  \le \tilde G^2 = \sum_{k,k'} \Pi_k \Pi_{k'} (\sigma^+_k \sigma^-_{k'} \sigma^-_0 \sigma^+_0 + H.c.) \\ 
& \le S^+ S^- \sigma^-_0 \sigma^+_0 + H.c. \; ,
\end{align}
where $S^+ = \sum_{k=1}^L \sigma^+_k$ is proportional to the so-called $SU(2)$ spin {\em raising} operator.
Let $\ket{S,m}$ be the eigenstates of $S^z=\sum_{k=1}^L \sigma^z_k$, with eigenvalue $m$, and total $SU(2)$ spin proportional to $S$. Then, $ m \in \{ -L , -L+2, \ldots, L\}$, $0 \le S \le L$, and
\begin{align} 
\nonumber
S^+ \ket{S,m-2} & = N(S,m)\ket{S,m} \; , \\ 
\label{eq:SU(2)}
 N(S,m) & = \frac 1 2 \sqrt{(S+m) (S-m+2)} \; .
\end{align}
It implies
\begin{align}
\label{eq:SU(2)eigenvalue}
 \(S^+ S^- \sigma^-_0 \sigma^+_0  + H.c. \) & \ket{b}_0 \otimes \ket{S,m} =  \\
 \nonumber
 & = \( \delta_{b,0}
N(S,m+2)^2 + \delta_{b,1} N(S,m)^2 \)\ket{b}_0 \otimes \ket{S,m} \; ,
\end{align}
where $b \in \{0,1\}$ denotes the state of the  qubit in the $0$-th position. The operator
$\tilde G^2$ leaves the  $a$-particle subspace invariant, 
and we write $a=(L-m)/2$.
From Eqs.~\eqref{eq:SU(2)} and~\eqref{eq:SU(2)eigenvalue}, 
the eigenvalues of $\tilde G^2$ in that subspace are bounded from above by
$(L+m+2)(L-m+2) /4 = (L+1-a)(1+a)$. The upper bound to the absolute
value of the eigenvalues of
$\tilde G$, in that subspace, is $\sqrt{(L+1-a)(1+a)}$.

We define $Z=L-2-\sum_{k=0}^L \sigma^z_k$ and note that $Z$ acts trivially in the single-particle
space and nontrivially elsewhere. We also define
\begin{align}
H' = \frac{ 1 }{L^{}} \; \bar G + 2 Z \; ,
\end{align}
$d>0$. $H'$ acts invariantly in the $a$-particle subspaces. The eigenvalues of $H'$
are then bounded from below by
\begin{align}
-\frac{ 1 }{L^{1/d}}  \sqrt{(L+1-a)(1+a)} + 4(a-1) \; .
\end{align}
For the specific case of $d=2$, $a=1$, we already know that the eigenvalues of $H'$
are in the range $[-1, 1]$; this case  corresponds to the single-particle one
whose spectrum was analyzed previously. Also, for $d \le 2$, $a \ge 2$,
all the eigenvalues coming from the $a$-particle subspaces do not {\em mix}
with the eigenvalues of the single-particle subspace, and the potentially high degeneracy 
of the null space is avoided. For $a=0$ the exact eigenvalues of $H'$ are
$\sqrt{\Delta}/(2 L^{1/d}) -4 \le -7/2$ if $d \le 2$.

Summarizing,
the Hamiltonian
\begin{align}
\label{eq:finalAQCH}
H' = \frac 1 {L^{1/d}} \left[ \sum_{k=1}^L \Pi_k (\sigma^+_k \sigma^-_0 + \sigma^-_k \sigma^+_0) +
\frac{\sqrt \Delta} 2 \frac{(1+ \sigma_z^0 )} 2 \right] +2 Z 
\end{align}
has $\ket{\psi_0} \ket{10\ldots0}$ as unique eigenstate of eigenvalue 0,
and all other eigenvalues are at distance at least $\sqrt{\Delta}/L^{1/d}$
if $d \le 2$. $H'$ can be regarded as a physical Hamiltonian.
In Fig.~\ref{fig:localH} we represent $H'$ for the case in which $H$
is the frustration-free Hamiltonian of a system in one spatial dimension, and $d=1$.

 \begin{figure}[ht]
  \centering
  \includegraphics[width=7cm]{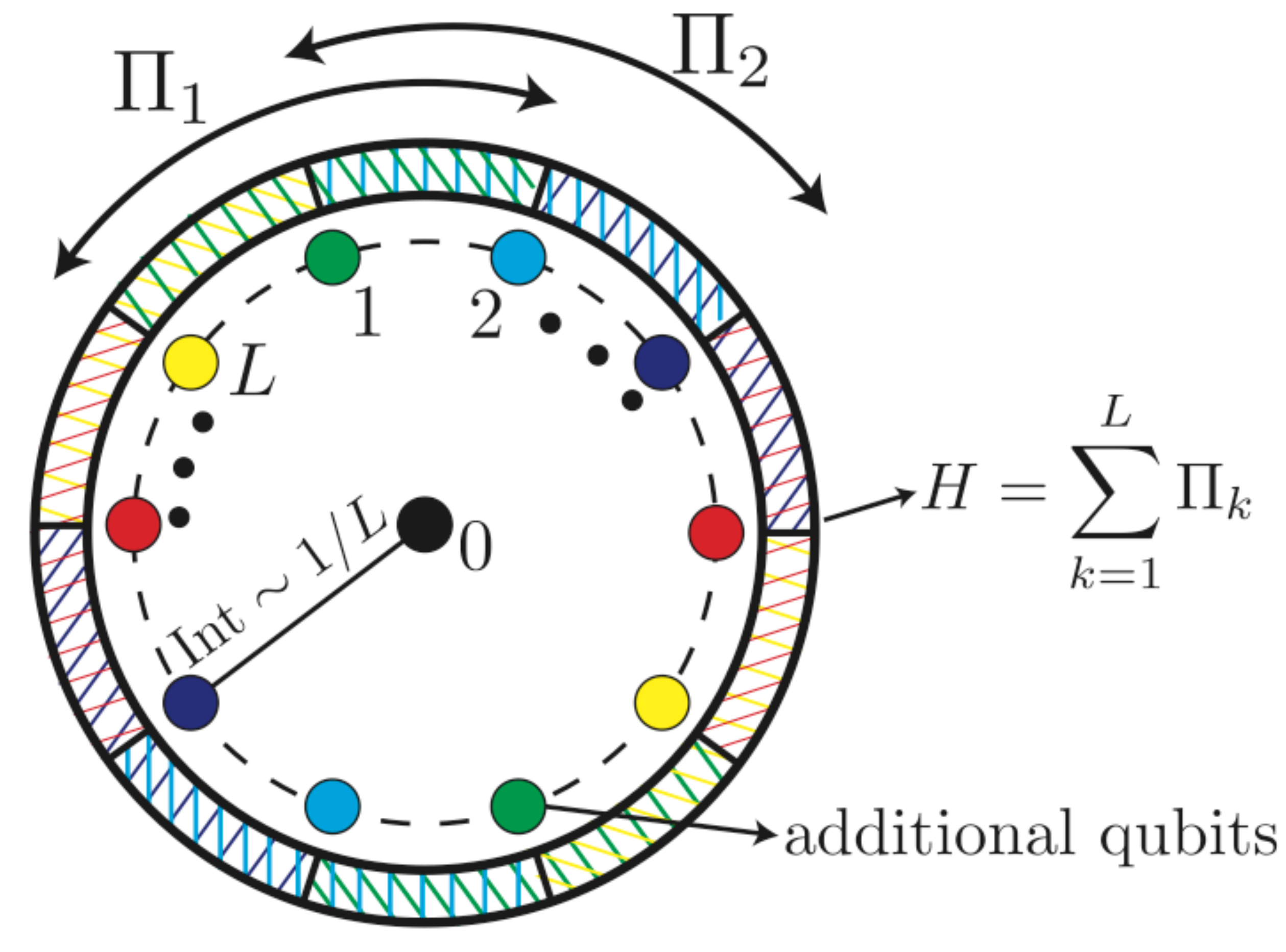}
 \caption{
A physically local representation of $H'$. Each additional qubit $1,\ldots,L$ 
has assigned a color and interacts with a particular term $\Pi_k$ in the Hamiltonian
and the centered qubit. In a one dimensional construction, $\Pi_k$ represents a nearest-neighbor
interaction. Therefore, $\Pi_k$ and $\Pi_{k+1}$ have support overlapping on a single body that is 
represented by lines of color that match that of the additional qubit.}
  \label{fig:localH}
\end{figure}


The construction in Ref.~\cite{aharonov_line_2009}
leads indeed a family of frustration-free Hamiltonians in one spatial dimension.
For a quantum circuit with $Q$ gates,
the spectral gap of $H$ in Ref.~\cite{aharonov_line_2009} can be made of order $1/Q^2$
\footnote{We acknowledge discussions with S. Irani.}.
  $H'$ in Eq.~\eqref{eq:finalAQCH} allows us to amplify the gap 
up to order $1/Q^{1+1/d}$, with an important gain if $d = 2$ and $Q \gg1 $.

\section{GAP: General Hamiltonians}
\label{generalcase}
We show that spectral gap amplification is not possible for general Hamiltonians
in the black-box model.
Again, we consider instances $H_x$ whose unique ground states $\ket{\psi_x}$
can be used to solve SEARCH as  in Lemma~\ref{lemma:random_grover}.
As before, the cost of an evolution with $H_x$ is the number of
 black boxes $O_{H_x}$ required to implement it. We obtain:
\begin{theorem}
  There are stoquastic (sparse) Hamiltonians for which spectral gap amplification
  is not possible. 
\end{theorem}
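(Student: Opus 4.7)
The plan is to reduce SEARCH to the existence of a gap-amplified Hamiltonian, mimicking the strategy of the optimality theorem in Section~\ref{optimal} but without the frustration-free structure. Concretely, I would exhibit a family $\{H_x\}$, $x\in\{0,\ldots,M-1\}$, of sparse stoquastic Hamiltonians on $\mathcal{H}\simeq\mathbb{C}^M$ (or a small tensor extension) with the following properties: (i) $H_x$ has a unique ground state $\ket{\psi_x}$ with $|\braket{x}{\psi_x}|^2, |\braket{s}{\psi_x}|^2 \in \Omega(1)$, where $\ket s$ is the equal superposition; (ii) the spectral gap is $\Delta_x \in \Theta(1/\sqrt{M})$; (iii) each call to the black box $O_{H_x}$ can be implemented using only $O(1)$ calls to the search oracle $Q_x$ of Eq.~(\ref{searchoracle}). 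The canonical candidate is a sparse stoquastic variant of the analog Grover Hamiltonian, e.g. $H_x = \eta\,A - \ket x\bra x$ with $A$ the adjacency matrix of a constant-degree expander on $[M]$ (which is sparse and has non-positive off-diagonal entries up to an offset) and $\eta$ tuned so that the level crossing with $\ket x\bra x$ occurs at gap $\Theta(1/\sqrt{M})$; a standard two-level perturbative analysis in the $\{\ket x,\ket s\}$ subspace gives the ground-state overlaps and the $1/\sqrt{M}$ gap.

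Given such a family, suppose for contradiction that gap amplification is possible, i.e. there exists $H'_x$ with unique ground state $\ket{\psi_x}\ket{\mathfrak{0}}$ and spectral gap $\Delta'_x \in \Omega(\Delta_x^{1-\epsilon})$ for some $\epsilon>0$, whose evolutions cost $O(|t|)$ queries to $O_{H_x}$ per unit time (the GAP definition). The overlaps of $\ket{\psi_x}\ket{\mathfrak{0}}$ with $\ket x\ket{\mathfrak{0}}$ and $\ket s\ket{\mathfrak{0}}$ are preserved, so Lemma~\ref{lemma:random_grover} applies to $\{H'_x\}$ and yields a quantum algorithm for SEARCH using $O(1/\Delta'_x)$ evolutions under $H'_x$. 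Combining the three cost relations (one evolution under $H'_x$ costs $O(1)$ uses of $O_{H_x}$, one use of $O_{H_x}$ costs $O(1)$ uses of $Q_x$) the whole procedure uses
\begin{align}
O\!\left(1/\Delta'_x\right) \;=\; O\!\left(M^{(1-\epsilon)/2}\right) \;=\; o(\sqrt{M})
\end{align}
queries to $Q_x$, contradicting the $\Omega(\sqrt{M})$ BBBV lower bound~\cite{bennett_searchbound_1997}. Hence no $\epsilon>0$ can be achieved and $\Delta'_x \in O(\Delta_x)$; no amplification is possible.

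The main obstacle is verifying all three conditions (i)--(iii) for a single concrete sparse stoquastic family, since naive choices for SEARCH (such as $-\ket s\bra s-\ket x\bra x$) are stoquastic but not sparse, while a bare diagonal potential is sparse but has no overlap with $\ket s$. The expander/quantum-walk route should succeed: expander Laplacians are sparse, stoquastic (up to a trivial diagonal shift), and the one-vertex impurity problem in that basis is spectrally equivalent, in the relevant two-level subspace, to the analog Grover Hamiltonian, yielding both the $1/\sqrt{M}$ gap and the required constant overlaps with $\ket x$ and $\ket s$. Implementing $O_{H_x}$ with $O(1)$ queries to $Q_x$ is then immediate: the expander piece is $x$-independent and the $\ket x\bra x$ piece reduces to a single controlled phase conditioned on $Q_x$.
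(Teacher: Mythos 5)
Your proof plan follows essentially the same route as the paper's: a sparse spatial-search family $\{H_x\}$ whose unique ground state has $\Omega(1)$ overlap with both $\ket x$ and $\ket s$, gap $\Theta(1/\sqrt M)$, a black-box implementable with $O(1)$ search-oracle calls, and a BBBV contradiction patterned on the optimality proof of Sec.~\ref{optimal}. The paper instantiates this with $H_x = -\ket x\bra x - c A$, $A$ the adjacency matrix of a five-dimensional square lattice at the critical coupling $c=c^*$, delegating the gap and overlap estimates to Childs and Goldstone (Ref.~\cite{childs_spatialsearch_2004}); you propose a constant-degree expander instead, which is a legitimate variant since the expander's spectral gap puts it in the same ``complete-graph'' universality class (constant overlaps, $1/\sqrt{M}$ avoided crossing) as the $d\ge 5$ lattices. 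One caveat: as written, $H_x = \eta A - \ket x\bra x$ with $\eta>0$ and $A$ an adjacency matrix has non-negative off-diagonal entries, so it is \emph{not} stoquastic, and its ground state is concentrated on the most negative eigenvector of $A$ rather than near $\ket s$. You need $-\eta A - \ket x\bra x$, or equivalently the Laplacian $\eta L - \ket x\bra x$ (differing by a multiple of the identity for a regular graph), which you do correctly identify in your closing paragraph but should be the stated Hamiltonian. With that sign fixed, the cost-chain and the $\Omega(\sqrt M)$ lower-bound argument match the paper's proof exactly.
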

\begin{proof}
The result will follow again from the known lower bound on SEARCH.
  We consider the family
  $\{H_x \}_{0 \le x \le M-1}$ with
\begin{align}
\label{squarelattice}
H_x = - \ket x \bra x - c A \; ,
\end{align}
and $x$ unknown. 
The constant $c>0$ will be specified later and $A$ is a $M \times M$ symmetric matrix
proportional to the adjacency matrix of a
square lattice in five spatial dimensions. The matrix entries of $A$
are $a_{y'y}=1$ if $(y,y')$ is an edge in the lattice
and $a_{y'y}=0$ otherwise.
$A$ is sparse and independent
of the unknown $x$. In this case we assume a matrix oracle $W_x$ for $H_x$.
This is stronger than $O_{H_x}$ in the sense that $O_{H_x}$ can be implemented
with a single call to $W_x$. Further, by construction, $W_x$ requires calling the search
oracle of Eq.~(\ref{searchoracle}) only once.

We present some useful properties of $H_x$ and refer to
Ref.~\cite{childs_spatialsearch_2004} for details.
There is a {\em phase transition point} at $c=c^*>0$ where the spectral gap
is
\begin{align}
\Delta \to \frac {I_1}{2\sqrt{I_2 M}} \in \cO(1 /\sqrt{M}) \; ,
\end{align}
and
 \begin{align}
\label{probx2}
p_x &=| \langle x \ket{\psi_x}|^2  \to   0.37\;,\\
\label{probs}
p_s & = | \langle s \ket{\psi_x}|^2  \to \frac 1 2\;.
\end{align}
Then, the ground state $\ket{\psi_x}$ has good overlap with both
$\ket x$ and $\ket s$ as in Lemma~\ref{lemma:random_grover}. 
SEARCH can  be solved with cost $\cO(\sqrt{M})$ using the black boxes $O_{H_x}$
or, similarly, the search oracles.
If gap amplification for $H_x$ were possible, this would yield a faster
method to solve SEARCH.
The proof follows by contradiction.
\end{proof}

\section{Stoquastic vs. stochastic matrices}
\label{stochastic-stoquastic}
Because stoquastic Hamiltonians do not suffer from the so-called
sign problem~\cite{suzuki_qmc_1998}, an important task is to devise classical
probabilistic methods to solve the AST problem by sampling from the
distribution as determined by the ground state. 
A well-known method for this purpose is quantum Monte-Carlo (QMC)~\cite{anderson_montecarlo}.
Another method may be obtained by finding a diagonal similarity transformation that maps
stoquastic Hamiltonians to stochastic matrices~\cite{bravyi_stoquastic_2009}.
The fixed point of the resulting Markov chain is the desired distribution.
Contrary to QMC, this second method requires knowledge
of the ratio of amplitudes of the ground state for the mapping.
Our result on no gap amplification for all stoquastic Hamiltonians implies
that finding the transformation to stochastic matrices should be hard in the 
black-box model. Otherwise it would imply a method to amplify the gap
by following the results on frustration-free Hamiltonians.

We can go further and give a family of stoquastic Hamiltonians
for which  classical probabilistic methods, that solve the AST 
by sampling from the ground state, has a large black-box implementation cost.
For some QMC algorithms, our result implies that the spectral gap
of the constructed stochastic matrix rapidly decreases with the problem size.
Once more, our proof considers a family of Hamiltonians $\{ H_x \}_{0 \le x \le M-1}$ 
such that sampling from the ground state of $H_x$ outputs the unknown $x$ with large probability.

\begin{lemma}
  Consider the family of Hamiltonians
\begin{align}
H_x = - \ket x \bra x + F/4 \; ,
\end{align}
where $F \in \mathbb{C}^M \times \mathbb{C}^M$ is independent of $x$ and $\| F \| \le 1$. 
 Then, the lowest eigenvalue is bounded as
$E_0 \le -3/4$, the spectral gap satisfies
$\Delta \ge 1/2$, and the ground state 
has large support  on
$\ket x$, i.e. $|\bra x \psi_x \rangle|^2 \in \cO(1)$.
\end{lemma}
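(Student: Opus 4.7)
The plan is to establish the three claims in sequence using only the variational principle and Weyl's perturbation bound for Hermitian operators.

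First, for the upper bound $E_0 \le -3/4$, I would substitute the trial state $\ket{x}$ into the Rayleigh quotient. A direct computation yields $\bra{x} H_x \ket{x} = -1 + \bra{x} F \ket{x}/4$, and the operator-norm bound $\|F\| \le 1$ forces $\bra{x} F \ket{x} \le 1$. Hence $\bra{x} H_x \ket{x} \le -3/4$, and the variational principle gives $E_0 \le -3/4$.

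Second, for the gap $\Delta \ge 1/2$, I would view $H_x$ as the perturbation of the rank-one operator $-\ket{x}\bra{x}$ by the Hermitian term $F/4$, whose norm is at most $1/4$. The unperturbed spectrum consists of a single eigenvalue $-1$ (with eigenvector $\ket{x}$) together with the eigenvalue $0$ of multiplicity $M-1$. Weyl's inequality then shows that the smallest eigenvalue of $H_x$ lies in $[-5/4,-3/4]$ and every other eigenvalue lies in $[-1/4,1/4]$, so the gap to the ground state is at least $(-1/4)-(-3/4)=1/2$.

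Third, for the overlap lower bound, I would insert the exact ground state $\ket{\psi_x}$ into the Rayleigh quotient and exploit the bound on $E_0$ from the first step. The identity
\begin{align}
E_0 = \bra{\psi_x} H_x \ket{\psi_x} = -|\braket{x}{\psi_x}|^2 + \bra{\psi_x} F \ket{\psi_x}/4 \; ,
\end{align}
combined with $E_0 \le -3/4$ and $\bra{\psi_x} F \ket{\psi_x}/4 \ge -1/4$, rearranges to $|\braket{x}{\psi_x}|^2 \ge 1/2$, a constant independent of $M$.

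None of these steps should present real difficulty, since each follows from elementary spectral-perturbation arguments and a well-chosen trial vector. The only point that requires a small amount of care is the direction of the Weyl bound: the gap of $1/2$ emerges because we perturb the known spectrum of $-\ket{x}\bra{x}$, which has the clean structure of one isolated negative eigenvalue and a highly degenerate zero eigenvalue, by the small term $F/4$, rather than the other way around.
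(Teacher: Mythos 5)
Your proposal is correct, and it is essentially the same argument the paper's one-line sketch gestures at, made explicit. The paper suggests tracking the spectrum of $-\ket x\bra x + gF/4$ over $g\in[0,1]$; you instead apply Weyl's inequality once, comparing the endpoints $g=0$ and $g=1$ directly, which is a touch cleaner since you never need to argue that the gap stays open along the whole interpolation — only that the unperturbed spectrum ($-1$ simple, $0$ with multiplicity $M-1$) is shifted by at most $\|F\|/4\le 1/4$. The variational bound $\bra{x}H_x\ket{x}\le -3/4$ for $E_0$ and the Rayleigh-quotient identity giving $|\braket{x}{\psi_x}|^2 = -E_0 + \bra{\psi_x}F\ket{\psi_x}/4 \ge 3/4 - 1/4 = 1/2$ are exactly what the paper's perturbative estimate amounts to at $g=1$. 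No gaps.
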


\begin{proof}(Sketched)
 All results can be simply proved by considering $-\ket x \bra x + g F/4$
  and using perturbation theory in the range $[0,1]$ for $g$. 
  \end{proof}

If $F$ is stoquastic and irreducible, the ground state of $H_x$ is $\ket{\psi_x} = \sum_y \sqrt{\pi_y} \ket y$,
with $\pi_y > 0$.
In this case we let $S_x$ be a stochastic matrix whose fixed point is almost the distribution $\pi$.
We assume that a matrix oracle for $S_x$ follows from a matrix oracle for $H_x$.
The latter requires using the (classical) search oracle only once: for input $y$, the matrix elements
of $H_x$ in the $y$th row depend on whether $x=y$ or not.
In path integral QMC, for example,   $S_x$
can be constructed efficiently as follows. For
sufficiently large $\beta$ the probability $\pi_y$ is (almost)
proportional to
\begin{align}
\bra y e^{-\beta H_x} \ket y \; .
\end{align}
Because $\Delta \ge 1/2$, it suffices to choose $\beta \in \cO(\log M)$
to bound the approximation error. Next, because the exponentiation of $H_x$ might be hard to compute,
 we write
$e^{- \beta H_x} = \prod_{l=1}^p e^{-\eta H_x}$, with $\eta \in
\cO(1/\beta)$,
and approximate $e^{-\eta H_x} = \one - \eta H_x + \cO(\eta^2)$. 
We can relate $\prod_{l=1}^p  (\one - \eta H_x)$ with a ``classical system'' $H_c$
(diagonal Hamiltonian) that acts on a larger Hilbert space of dimension
$ M^ L$. More precisely, we define
\begin{align}
   H_c = - \sum \log (f_{r-1,r}) \ket{z^{(r-1)}} \ket{z^{(r)}} \bra{z^{(r-1)}} \bra {z^{(r)}}\;,
\end{align}
with $f_{r-1,r} = \bra{z^{(r-1)}} \one - \eta H_x \ket{z^{(r)}}$.
Note that $\tr(e^{- \beta H_x}) \approx \tr (e^{-H_c})$.
The computation of any diagonal entry of $H_c$ requires
evaluating $p \in \cO[\text{polylog}(M)]$ matrix elements of $H_x$
and thus the same number of search oracles.
Sampling from the distribution determined by $H_c$ requires choosing
an appropriate Markov chain, like the one in the Metropolis
algorithm. From known complexity bounds on the search problem,
any choice of Markov chain will lead, in the black-box model, to a mixing time
\begin{align}
\tau_{\text{mix}} \in \cO\left [ \frac M {\text{polylog}(M)} \right]\; .
\end{align}

\section{Conclusions}
\label{conc}
We introduced and studied the spectral gap amplification problem or GAP.
The goal is to efficiently construct a Hamiltonian
that has the ground state of a given Hamiltonian as eigenstate,
but  a bigger spectral gap. This problem
is motivated by the results in Ref.~\cite{somma_quantum_2008},
where we showed
quantum speedups of Monte-Carlo methods by
giving a quadratic
gap amplification for a specific family of Hamiltonians.
The GAP
has important applications in adiabatic quantum computing and other methods
for adiabatic state transformations, where the
implementation cost is dominated by the inverse gap: If gap amplification is possible, this
results in a quantum speedup. 

We showed that a quadratic spectral gap amplification is indeed
possible for frustration-free Hamiltonians. This
generalizes the results in Ref.~\cite{somma_quantum_2008} and gives 
a more efficient construction to speedup classical Markov-chain based
methods. In a suitable black-box model,
we proved that our method provides the biggest amplification possible.
The result
for frustration-free Hamiltonians also provides more efficient quantum adiabatic methods
to simulate quantum circuits and for the preparation of projected entangled pair states
or PEPS.

We presented a family of
Hamiltonians for which spectral gap amplification
is not possible under some assumptions. The Hamiltonians in this case can be
sparse and stoquastic. They are constructed so that by
preparing their ground states one can solve the unstructured
search problem, and then use  known lower complexity bounds
 to prove the impossibility result by contradiction.
 A corollary is that classical probabilistic methods
that sample from the ground state distribution of
some Hamiltonians may require a large implementation
cost.

\section*{Acknowledgments}
RS thanks Aaron Allen (UNM), Jonas Anderson (UNM), Stephen Jordan
(NIST), and Barbara Terhal (IBM) for discussions on constructions of
local Hamiltonians, A. Childs (IQC) for pointing out
Ref.~\cite{childs_spatialsearch_2004}, and Andrew Landahl for
discussions on constructions of local Hamiltonians.  RS acknowledges
support from the National Science Foundation through the CCF program,
and the Laboratory Directed Research and Development Program at Los
Alamos National Laboratory and Sandia National Laboratories.  SB
thanks the National Science Foundation for support under grant
PHY-0803371 through the Institute for Quantum Information at the
California Institute of Technology and Lockheed Martin.


\appendix

\section{Proof of optimal amplification}
\label{App:OptimalityProof}

  Each $H_x$ is a symmetric matrix related to the adjacency matrix $A$ of
  a sparse expander graph $G=(M,d,\lambda)$.  $M$ is the
  number of vertexes, $d \in \cO(1)$ is the degree of $G$, and
  $\lambda d$ is the second largest eigenvalue of $A$. WLOG, $\lambda \le 1/2$
  (e.g., a Ramanujan graph). 
 Before specifying $H_x$ in a frustration-free form, 
 we define $H_x$ by its matrix elements as follows:
  \begin{align}
\label{matrixentries}
 \bra y H_x \ket z= \left\{ \begin{array}{cl}
 \frac{1}{M-1}   & \text{if} \ y=z=x \cr
 \frac {-1} {d \sqrt{M-1}} & \text{if} \ \{y,z\} \in E \ \textrm{and either $y=x$ or $z=x$}\cr
 \frac {-1} d &  \text{if} \ \{y,z\} \in E \ \textrm{and}\ x\ne y \ne z \ne x \cr
1 & \text{if} \   y = z \ne x \cr
0 & \text{otherwise} .
\end{array}
\right.
\end{align}
$E$ is the set of edges of the graph $G$.
$H_x$ has
the following unique ground state of eigenvalue 0:
\begin{align}
\ket{\psi_x} = \frac 1 {\sqrt{2}} \ket x + \frac 1{ \sqrt{2(M-1)} } \sum_{y \ne x} \ket y \; .
\end{align}
This state satisfies the requirements of the algorithm in
Lemma~\ref{lemma:random_grover}: $\ket{\psi_x}$  has constant overlap with
 $\ket x$ and the equal superposition state.  Since $x$ is unknown,
 the preparation
 of $\ket{\psi_x}$ will allow us to solve SEARCH.

 We can bound the spectral gap $\Delta$ of $H_x$ in terms of $M$.
We define $Z_x = \one - H_x -
\ket {\psi_x} \bra {\psi_x}$. Because $\|H_x \| \le 1$, the norm of $Z_x$ is
$\| Z_x \| = \sup_{\ket \phi} \bra \phi Z_x \ket \phi = 1- \Delta $.
We can write $ \ket \phi = \beta \ket {x} +
\sqrt{1-\beta^2} \ket{x^\perp} $, with $\beta>0$ and
 $\ket{x^\perp}$ a unit state  orthogonal to $\ket x$, to
 obtain
\begin{align}
\nonumber
&| \bra \phi Z_x \ket \phi |\le  2 \beta  \sqrt{1-\beta^2} |\bra x  Z_x \ket {x^\perp} |\;+ \\
\label{gapbound11}
& \ \ \ + \beta^2 | \bra x  Z_x \ket x   |+
(1- \beta^2) | \bra {x^\perp}  Z_x \ket{ x^\perp}| \; .
\end{align}
An upper bound on the {\em rhs}  of Eq.~(\ref{gapbound11}) implies a lower bound on the gap.
We write
$\ket{x^\perp} = \sum_{y \ne x} l_y \ket y$.
The first term on the {\em rhs} of
Eq.~(\ref{gapbound11}) is
\begin{align}
2 \beta \sqrt{1- \beta^2} \sum_{y \ne x} |l_y | \Big|\left [ \bra x
  H_x \ket y +
  \frac 1 {2 \sqrt{M-1}} \right ] \Big| \; .
\end{align}
We use 
\begin{align}
\left| \bra x H_x \ket y+ \frac 1 {2 \sqrt{M-1}} \right| \le \frac 1 {2 \sqrt{M-1}} \ \forall \ y \;
\end{align}
and 
\begin{align}
  \sum_{y \ne x} |l_y| = \|\ket{x^\perp}\|_1 \le \sqrt{M-1}\| \ket{x^\perp}\|_2 = \sqrt{M-1} 
\end{align}
 to show
\begin{align}
\label{firstterm}
|2 \beta  \sqrt{1-\beta^2} \bra x  Z_x \ket {x^\perp} |\le |2  \beta \sqrt{1- \beta^2} | \frac 1 2 \; .
\end{align}
Further, from
the diagonal matrix elements
of $Z_x$, we have ($M \ge 3$):
\begin{align}
\label{secondterm}
 \beta^2 | \bra x  Z_x \ket x   | \le \beta^2 \left[ \frac 1 2 - \frac 1 {M-1} \right] \; .
\end{align}
Finally, if $P_{\bar x} = \sum_{y \ne x } \ket y \bra y$ is the projector onto
the orthogonal subspace of $\ket x$, then the third term in the {\em rhs} 
of Eq.~(\ref{gapbound11})
is bounded by
$(1-\beta^2) \| P_{\bar x} Z_x P_{\bar x} \| $. In addition,
\begin{align}
P_{\bar x} H_x P_{\bar x}  = P_{\bar x}(\one -  A/d) P_{\bar x}  \; ,
\end{align}
where $A$ is the adjacency matrix of  
$G(M,d,\lambda)$. If $\ket s$ is the equal superposition state,
$A \ket s =d \ket s$,
and this is the only eigenstate of $A$ with largest eigenvalue $d$.
We can write
\begin{align}
P_{\bar x} \ket {\psi_x} =\frac{\sqrt{M}} {\sqrt{2(M-1)}} P_{\bar x} \ket s \; ,
\end{align}
and consequently,
\begin{align}
\nonumber
\| P_{\bar x} Z_x P_{\bar x} \| &=\Big\| P_{\bar x} \left(A/d  - \frac M {2(M-1)}
\ket s \bra s \right) P_{\bar x} \Big\| \\
\nonumber
& \le \Big\|   \left(A/d  - \frac M {2(M-1)} \ket s \bra s \right)   \Big\| \; .
\end{align}
Because the second largest eigenvalue
of $A/d$ is $\lambda \le 1/2$, we obtain
\begin{align}
  \label{thirdterm}\| P_{\bar x} Z_x P_{\bar x} \| \le 1-  \frac M
  {2(M-1)}  \le \frac 1 2\;.
\end{align}

Using Eqs.~(\ref{firstterm}),~(\ref{secondterm}),
and~(\ref{thirdterm}), we get
\begin{align}
\label{finalbound1}
  \bra \phi Z_x \ket \phi \le \frac 1 2 + \beta \sqrt{1-\beta^2} - \frac{\beta^2}{M-1} \;.
\end{align}
The maximum  is found when $\beta$ satisfies
\begin{align}
\label{finalbound2}
- \frac{\beta^2}{M-1} = \frac {\beta \sqrt{1- \beta^2}}{(M-1)^2} - \frac 1 {2 (M-1)} \; .
\end{align}
Inserting Eq.~(\ref{finalbound2}) in Eq.~(\ref{finalbound1}),
and since $\beta \sqrt{1-\beta^2} \le 1/2$, yields
\begin{align}
  \bra \phi Z_x \ket \phi 
  & \le 1 - \frac 1 {2(M-1)} +\frac 1 {2(M-1)^2} \\ &\le 1 - \frac 1 {4(M-1)}\;.
\end{align}
for $M \ge 3$.
This implies
\begin{align}
\label{gapestimate}
 \Delta \ge \frac 1 {4 (M-1)} \; .
\end{align}

We now focus on the specification of $H_x$ (or a modified version $\tilde H_x$)
as a frustration-free Hamiltonian.
To this end, we first describe a particular
choice for the projectors appearing in $H_x$.
We define the function
\begin{align}
  c(y) = \left\{
    \begin{array}{cl}
      \frac 1 {\sqrt{d(M-1)}} & \text{if}\ y = x \cr
      \frac 1 {\sqrt d} & \textrm{e.o.c.}
    \end{array}\right.
\end{align}
We also define $E_o$  as the set of ordered pairs $(y,z)$ of edges in $G$, where each
edge appears only once. That is, if $\{y,z\} \in E$, then either $(y,z)$
or $(z,y)$ appears in $E_o$. 
For each edge $(y,z) \in E_0$ we
define the (unnormalized) state
\begin{align}
\ket{\phi_{(y,z)}} = c(y)\ket y - c(z) \ket z\;.
\end{align}
Then,
\begin{align}
\label{Hx}
H_x = \sum_{(y,z) \in E_o} \ket{\phi_{(y,z)}} \bra{\phi_{(y,z)}} \; ,
\end{align}
as the matrix entries coincide with those in Eq.~(\ref{matrixentries}).
Because $\bra{\phi_{(y,z)}} \psi_x \rangle=0$  for all edges, $H_x$ is frustration free.

For consistency with the definitions in previous sections, we define the  projectors
\begin{align}
\Pi_{(y,z)}= \frac {\ket{\phi_{(y,z)}}\bra{\phi_{(y,z)}}} {\bra{\phi_{(y,z)}} \phi_{(y,z)}\rangle}
\end{align}
and the corresponding modified Hamiltonian
\begin{align}
\label{Hx2}
\tilde H_x = \sum_{(y,z) \in E_o} \Pi_{(y,z)} \; .
\end{align}
 $\ket{\psi_x}$ is still the unique
 ground state of $\tilde H_x$ . Since
\begin{align}\label{eq:gap_normalization}
  \bra \varphi \tilde H_x \ket \varphi  = \sum_{(y,z) \in E_o} \bra \varphi  \Pi_{(y,z)} \ket \varphi > \sum _{(y,z) \in E_o} \braket{\varphi}{\phi_{(y,z)}} \braket{\phi_{(y,z)}}{\varphi} =  \bra \varphi H_x \ket \varphi\; ,
\end{align}
the spectral gap of $\tilde H_x $ is also bounded from below as  $\Delta \ge \frac 1 {4 (M-1)}$.

The number of terms in Eq.~(\ref{Hx2}) grows with the number
of edges in the graph,   $d M /2$. It is convenient to reduce the number
of projectors so that $L \in \cO(1)$, and $L$ playing no significant role in the scaling of the amplified gap.
 We then assume a given
 edge coloring of the graph $G$ with edge chromatic number $\chi$
of order $d$~\cite{vizing1964} [i.e., $\chi \in \cO(1)$].  Let
$c_1,\ldots,c_\chi$ be the different colors. All the projectors
$\Pi_{(y,z)}$ belonging to one of the colors are, by construction,
orthogonal to each other as they don't share a vertex. For each $k \in \{1, \ldots, \chi \}$, we define
the projectors
\begin{align}
\Pi_k = \sum_{
  \begin{subarray}{c}
    (y,z) \in c_k 
  \end{subarray}
} \Pi_{(y,z)} \; ,
\end{align}
Then,
$\tilde H_x = \sum_{k=1}^{\chi} \Pi_k $.

The specification of $\tilde H_x$ in this proof is as follows.
For each $k$ we assume the
existence of a reversible version of the matrix oracle $W$ that, on input
$(k,y)$, it outputs $(z,\bra z \Pi_k \ket y)$, where $z$ 
is the only vertex that shares an edge of coloring $c_k$ with $y$.
Note that all the matrix elements $\bra z \Pi_k \ket y$ are equal
unless $y=x$ or $z=x$.  The implementation of $W$
requires then deciding whether its input $y$ is the marked vertex $x$
or not. It implies that the reversible version of $W$ can be 
implemented using a single call to the search oracle of Eq.~(\ref{searchoracle}).
Further, the black box $O_{\tilde H_x}$, which  implements $\exp\{-i \Pi_k t\}$ 
on ancilliary input $\ket k \ket t$, can be constructed
using $\cO(1)$ oracles $W$. This follows from 
the simple observation that $\exp\{-i \Pi_k t\}$ induces rotations
in the two-dimensional subspaces spanned by $\{ \ket y, \ket z \}$,
where $z$ is the only vertex that shares an edge of $c_k$ with $y$.
The implementation of a reversible $O_{\tilde H_x}$ requires then using the search
oracle at most two times ($y \stackrel{?}{=} x$ or $z \stackrel{?}{=}x$).

Let $\tilde H'_x$ be the Hamiltonian that has
$\ket{\psi_x} \ket{\mathfrak 0}$  as a
 non-degenerate eigenstate and spectral gap $\Delta'$. 
 SEARCH can then be solved using the technique in
Sec.~\ref{conventions} which requires evolving with $\tilde H'_x$ 
for time $\cO(1/\Delta')$. By assumption, it implies 
that SEARCH can be solved using 
$\cO(1/\Delta')$ black boxes $O_{\tilde H_x}$ or, similarly,
$\cO(1/\Delta')$ search oracles. Therefore, the 
known complexity bound on SEARCH implies that
$\Delta' \le 1/\sqrt{M} \in \cO(\sqrt{\Delta})$. 
 In this black-box model, our construction
 in Sec.~\ref{frustrationfree} gives the biggest gap amplification
 possible for frustration-free Hamiltonians.


\section{Spectral properties of $\tilde G$}
\label{App:LocalityProof}

The first property to notice is that if $\ket{\psi_0}$ is in the
null space of $H$, then $\Pi_k \ket{\psi_0}=0$.
Therefore $\tilde G \ket{\psi_0}\ket{{0}} =0$ and $\ket{\psi_0}\ket{{0}}$ is an eigenstate
in the null space of $\tilde G$.

As in Sec.~\ref{frustrationfree}, we label the eigenvalues of $H$ by
$\lambda_j$. Assume now $\lambda_j \ne 0$, and consider the action of
$\tilde G$ on the state $\ket{\psi_j 0}$
\begin{align}
  \tilde G \ket{\psi_j 0} =  \sum_k \Pi_k \ket{\psi_j k}\;.  
\end{align}
Notice that
\begin{align}
  \bra{\psi_j 0} \tilde G \ket{\psi_j 0} =   \sum_k \bra{\psi_j k}  \Pi_k \ket{\psi_j 0} = 0
\end{align}
and that
\begin{align}
  \|\tilde G \ket{\psi_j 0} \|^2 = \sum_k \bra{\psi_j k} \Pi_k
    \ket{\psi_j k} =\lambda_j \;.  
\end{align}
We denote by
\begin{align}
  \ket{\perp_j} = \frac 1 {\sqrt{\lambda_j}} \tilde G \ket{\psi_j 0}   
\end{align}
the corresponding normalized state. Next note that
\begin{align}
  \tilde G \ket{\perp_j} = \frac 1 {\sqrt{\lambda_j}} \sum_k \Pi_k
  \ket{\psi_j 0} = \sqrt{\lambda_j} \ket{\psi_j 0}\;.
\end{align}
Therefore, the Hamiltonian $\tilde G$ is invariant in the subspace
$\mathcal{V}_j = \{\ket{\psi_j 0} ,\ket{\perp_j}\}$. Define $\tilde G_j =
\tilde G_{|\mathcal V_j}$ the projection of $\tilde G$ in this invariant
subspace. In the basis $\{\ket{\psi_j 0}, \ket{\perp_j}\}$ we can write
\begin{align}
  \tilde G_j = \begin{pmatrix}
      0 & \sqrt{\lambda_j} \\
      \sqrt{\lambda_j} & 0
    \end{pmatrix}\;,
\end{align}
with eigenvalues $\pm \sqrt{\lambda_j}$.

Finally, we note that because the $\ket{\psi_j}$ form a complete basis, $\tilde G$ acts trivially on any other state orthogonal to $\oplus_j \tilde{\cal V}_j$. While the dimension of the null space of $\tilde G$ is larger than that of the null space of $H$, this has no effects
when solving the AST problem: transitions between $\ket{\psi_j}
\ket{{0}}$ and any other orthogonal state in the null space are
forbidden. We can also use the trick in Sec.~\ref{frustrationfree} to reduce
such a degeneracy, if necessary.

\end{document}